
\documentclass[preprint,12pt]{elsarticle}




\usepackage{amssymb}

\usepackage{lineno}
\usepackage{graphicx}
\usepackage{xcolor}
\usepackage{tikz}
\usetikzlibrary{calc, shapes}
\usepackage{caption}
\usepackage{subcaption}
\usepackage{amssymb,amsthm,latexsym}
\usepackage[left=25mm,right=25mm,top=30mm,bottom=30mm]{geometry}
\usepackage{amsmath,graphicx}

\newcommand{\centroidT}{\ensuremath{\mathcal{T_C}}}
\newcommand{\rle}{\ensuremath{{\sf rle}}}

\newcommand{\stcentroidT}{\ensuremath{\mathcal{ST_C}}}
\newcommand{\rS}[1]{\ensuremath{\hat{S}_{#1}}}
\newcommand{\lz}[1]{\ensuremath{{\sf Z}_{#1}}}
\newcommand{\sufft}{\ensuremath{\mathcal{ST}}}
\newcommand{\locus}{\sf locus}
\newcommand{\child}{\sf children}

\newtheorem{theorem}{Theorem}

\newtheorem{lemma}[theorem]{Lemma}

\journal{Theoretical Computer Science}

\begin{document}

\begin{frontmatter}



\title{Adaptive Learning of Compressible Strings\tnoteref{support}}
\tnotetext[mytitlenote]{Supported by MIUR PRIN 2017 Project 2017K7XPAN and Universit\`a di Pisa Project PRA\_2020-2021\_26.}


\author[inst1]{Gabriele Fici}

\affiliation[inst1]{organization={Dipartimento di Matematica e Informatica, Universit\`a degli Studi di Palermo},
            city={Palermo},
            country={Italy},
            email={, gabriele.fici@unipa.it}}

\author[inst2]{Nicola Prezza}

\affiliation[inst2]{organization={Dipartimento di Scienze Ambientali, Informatica e Statistica, Universit\`a Ca' Foscari},
            city={Venezia},
            country={Italy},
            email={, nicola.prezza@unive.it}}

\author[inst3]{Rossano Venturini}

\affiliation[inst3]{organization={Dipartimento di Informatica, Universit\`a di Pisa},
            city={Pisa},
            country={Italy},
            email={, rossano.venturini@unipi.it}}

\begin{abstract}
Suppose an oracle knows a string $S$ that is unknown to us and that we want to determine. The oracle can answer queries of the form ``Is $s$ a substring of $S$?''. 
In 1995, Skiena and Sundaram showed that, in the worst case,  any algorithm needs to ask the oracle $\sigma n/4 -O(n)$ queries in order to be able to reconstruct the hidden string, where $\sigma$ is the size of the  alphabet of $S$ and $n$ its length, and gave an algorithm that spends $(\sigma-1)n+O(\sigma \sqrt{n})$ queries to reconstruct $S$. 
The main contribution of our paper is to improve the above upper-bound in the context where the string is compressible.
We first present a universal algorithm that, given a (computable) compressor that compresses the string to $\tau$ bits, performs $q=O(\tau)$ substring queries;
this algorithm, however, runs in exponential time. 
For this reason, the second part of the paper focuses on more time-efficient algorithms whose number of queries is bounded by  specific compressibility measures. 
We first show that any string of length $n$ over an integer alphabet of size $\sigma$ with \rle\ runs can be reconstructed with $q=O(\rle (\sigma + \log \frac{n}{\rle}))$ substring queries in linear time and space. We then present an algorithm that spends $q \in O(\sigma g\log n)$ substring queries and runs in $O(n(\log n + \log \sigma)+ q)$ time using linear space, where $g$ is the size of a smallest straight-line program generating the string.
\end{abstract}



\begin{keyword}
String reconstruction  \sep String learning  \sep  Adaptive learning   \sep  Kolmogorov complexity  \sep  String Compression  \sep  Lempel-Ziv  \sep  Centroid decomposition  \sep  Suffix tree.
\end{keyword}

\end{frontmatter}


\section{Introduction}
\label{sec:intro}

String reconstruction (or learning) from substrings queries is a well-established  problem that has natural applications in many areas, including bioinformatics, data compression, security, etc. (see, for example,~\cite{JiangLi94,MS95,DBLP:journals/jcb/SkienaS95,Tsur05}). 

In a more general setting, one is interested in understanding whether and how it is possible to reconstruct an unknown target string $S$ from some piece of information about $S$. This information can be, for example, a collection of substrings (e.g., the classical NP-hard Shortest Superstring Problem), or substring compositions (\cite{DBLP:journals/siamdm/AcharyaDMOP15}), or subwords (\cite{Dress-Erdos}), of $S$. Furthermore, the problem can be viewed from different angles, e.g., combinatorial, computational, algorithmic, information theoretical. 

In this paper, we deal with the problem of reconstructing a string from information about its substrings.
Apart from the classical static model for the reconstruction (exact or with uncertainty), many different models have been introduced in the literature for the string reconstruction problem, including the one we consider in this paper, and which has been  presented in 1995 by Skiena and Sundaram~\cite{DBLP:journals/jcb/SkienaS95}. In this model, one can ask an oracle, which knows the target string $S$, queries of the form ``Is $s$ a substring of $S$?'' and is interested in designing an \emph{adaptive algorithm} minimizing the number of such queries. In this setting, \emph{adaptive} means that the algorithm may reuse the information resulting from previous queries in order to decide which queries to ask next.  

It is worth mentioning that, with the same model, other query complexities have been investigated very recently by Amir et al.~\cite{DBLP:conf/spire/AfsharAGM20}. 

A trivial information-theoretic argument implies a worst-case lower bound of $n\log \sigma$ queries, where $\sigma$ is the size of the alphabet of $S$. Skiena and Sundaram~\cite{DBLP:journals/jcb/SkienaS95} improved this bound and showed that $\sigma n/4-O(n)$ queries are necessary to reconstruct $S$ in the worst case. This remains true even if the oracle returns for each substring query the number of its occurrences in $S$. In the same paper, they gave an algorithm for the reconstruction which spends at most $(\sigma-1)n+O(\sigma \sqrt{n})$ queries, thus asymptotically matching the  lower bound. They also gave an algorithm that spends at most $(\sigma-1)n +2\log n + O(\sigma)$ queries if the length $n$ of $S$ is known. Iwama et al.~gave an algorithm for binary strings that spends $n+O(1)$ queries \emph{on average}~\cite{DBLP:journals/corr/abs-1808-00674}. Amir et al.~\cite{DBLP:conf/spire/AfsharAGM20} recently proved that if the string has period $p>0$, then it can be reconstructed using $O(\sigma p + \lg n)$ substring queries, even if both $n$ and $p$ are unknown.

We stress out that these bounds hold in the \emph{adaptive} case: answers to previous queries can be used to decide the next query. As shown by Skiena et al. \cite{MS95}, the non-adaptive model is much harder: if the algorithm has to reconstruct all substrings (of the unknown string) of length $k \leq n$ after a pre-determined batch of queries, then $\sigma^{k/2}/k$ queries are needed to solve the problem. Tsur \cite{Tsur05} explored this model more in detail, providing bounds as a function  of the fraction $(1-\epsilon)$ (with $0\leq \epsilon \leq 1$) of substrings  of length $k$ that have to be reconstructed: in this case, $\Omega(\epsilon^{-1/2}k^2)$ non-adaptive queries are sufficient and necessary.

\subsection{A novel approach to the problem}

While the aforementioned papers tackled the problem in the worst-case, the minimum number of queries needed to reconstruct a string may be significantly smaller than the worst-case on particular instances. For example, consider a string of the form $a^n$, where $a$ is a single letter. An algorithm could first try to find out if the string is of this form by issuing $O(\log n) + 2\sigma$ queries and, only if the check fails, proceed with Skiena and Sundaram's algorithm~\cite{DBLP:journals/jcb/SkienaS95}. Observe that the resulting algorithm optimizes for a particular class of \emph{highly-compressible} strings. In fact, in this paper we show that this reasoning continues to hold for \emph{any compressor}. Our first result is a universal algorithm that, given as input a computable compressor $\mathcal C$, performs the reconstruction asking a number of queries that is proportional to the bit-size $|\mathcal C(S)|$ of the string compressed by $\mathcal C$.
We complement this result by showing that any deterministic adaptive algorithm for reconstructing a string yields a string compressor. 
Together, these results imply the equivalence between the string reconstruction and compression problems. 

Motivated by the fact that our universal algorithm performs an exponential number of calls to $\mathcal C$, we then focus on optimizing the running time and the space usage for commonly used compressors, including run-length encoding, Lempel-Ziv factorization and context-free grammars. 

In measuring the efficiency of an algorithm, we 
assume that any query can be submitted to the oracle in constant time and space regardless of the length of the queried substring. The reason for this assumption is that the implementation of the oracle 
strongly depends on the application. For example, if the application admits a collaborative oracle, there are several possible approaches to achieve constant query time, e.g., using hashing. 
Moreover, one could also assume that the oracle knows the reconstruction strategy and therefore it could run the reconstruction algorithm itself, that is, we do not even need to transmit the next substring query because the oracle already knows the next query it has to answer. 

\subsection{Preliminary definitions}

Let $S$ be a binary string. 
A \emph{compressor} is an injective computable function $\mathcal C : \{0,1\}^+ \rightarrow \{0,1\}^+$ that converts any $S \in \{0,1\}^+$ into a reversible representation $\mathcal C(S)$ of size $|\mathcal C(S)|$ bits. We require also the inverse function $\mathcal C^{-1}$ (i.e. the function such that $\mathcal C^{-1}(\mathcal C(S))=S$) to be computable; this function is the \emph{decompressor} associated with $\mathcal C$. Informally speaking, a function $\mathcal C$ qualifies as a good compressor if $|\mathcal C(S)| \ll |S|$ on particular string families (for example, repetitive strings), and $|\mathcal C(S)| \in O(|S|)$ for all other strings outside this family.

A popular compressor is the LZ77 factorization of $S$. The Lempel-Ziv 1977 (LZ77) algorithm~\cite{LZ77} parses a string $S$ into a sequence of $z$ phrases, where each new phrase is either a fresh character or the longest string that also occurs starting from a position strictly smaller than the phrase start position.
The bit-size of the LZ77 factorization of $S$ is  $\Theta(z\log n)$. 
For example, the LZ77 factorization of the string $abbabba$ is $a|b|b|abba$. In this example, the string is factored into $z = 4$ phrases. A more restricted version of LZ77 does not allow overlaps between a phrase and its source. We denote this version as \emph{LZ77 without overlaps} and with $z_{no}$ the number of generated phrases. Between those two measures, it holds $z_{no} \in O(z \log n)$ \cite{navarro2020indexing}. Clearly, also $z \leq z_{no}$ always holds. This version factorizes the above string as $a|b|b|abb|a$, with $z_{no} = 5$. 
    
Another common measure of compressibility is the number $\rle$ of equal-letter runs in $S$, that is, the number of maximal unary substrings of $S$. This measure is not as strong as $z$; in fact, it is easy to see that $z_{no} \leq \rle$.

In this work we also consider the size (number of nonterminals) $g$ of the smallest straight-line program (SLP) producing (only) $S$. SLPs are particular cases of acyclic context-free grammars composed of rules of the kind $A \rightarrow BC$, where $A$ is a nonterminal and $B,C$ are either nonterminals or terminals.
 
The known relations between $g$ and $z_{no}$ are $g \in O(z_{no}\log(n/z_{no}))$ and $z_{no} \leq g$. 
See Navarro's recent survey~\cite{navarro2020indexing} for more details on these and several other relations between string complexity measures. 



 

\section{Universal string reconstruction}\label{sec:universal}


In this section we present an algorithm that, given a compressor $C$, reconstructs any string $S$ with $O(|C(S)|)$ queries to the oracle. We furthermore prove a dual result: any reconstruction algorithm performing $\chi(S)$ queries yields a compression algorithm (with associated decompressor) that compresses the string to $\chi(S)$ bits. These findings show that the string reconstruction problem essentially coincides with the string compression problem. For simplicity, in this section we restrict our attention to binary alphabets only.


We start with a lemma of Skiena and Sundaram \cite{DBLP:journals/jcb/SkienaS95} stating that any set $M$ of binary strings admits a string that is a substring of a constant fraction of the members of $M$. 
Letting $M$ be a set of strings, we let $M(S)$ denote the subset of $M$ whose elements contain $S$ as a substring. 

\begin{lemma}\label{lem:1/5}
	(\cite[Lem. 12]{DBLP:journals/jcb/SkienaS95}) Let $M \subseteq \{0,1\}^n $ be a set of binary strings, each of length $n$. 
	Then, there exists a string $S$ such that $ \frac{1}{5} |M| \leq |M(S)| \leq \frac{4}{5} |M|$.
\end{lemma}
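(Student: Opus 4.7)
The plan is to argue by contradiction. Set $m = |M|$ and suppose no binary string $S$ satisfies $\tfrac{m}{5}\le |M(S)| \le \tfrac{4m}{5}$, so that every $S$ is either \emph{heavy} ($|M(S)|>\tfrac{4m}{5}$) or \emph{light} ($|M(S)|<\tfrac{m}{5}$). The empty string $\varepsilon$ is heavy since $|M(\varepsilon)|=m$. My aim is to produce a heavy string of length $n$ and contradict the trivial bound $|M(T)|\le 1$ for $|T|=n$.

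The crux of the proof is a covering step: for any heavy $S$ with $|S|<n$, at least one of the four one-character extensions $0S,\,1S,\,S0,\,S1$ is also heavy. To see this, fix $T\in M$ containing $S$ as a substring at some position $i\in\{0,\dots,n-|S|\}$. Because $|S|<n$, we cannot have simultaneously $i=0$ and $i=n-|S|$, so either $T[i-1]\,S$ or $S\,T[i+|S|]$ is a substring of $T$; that substring equals one of $0S,1S,S0,S1$. Taking a union bound over $T\in M(S)$ gives
\[
  |M(S)| \;\le\; |M(0S)| + |M(1S)| + |M(S0)| + |M(S1)|.
\]
If all four extensions were light, the right-hand side would be strictly less than $4\cdot \tfrac{m}{5}=\tfrac{4m}{5}$, contradicting $|M(S)|>\tfrac{4m}{5}$.

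Starting from the heavy string $S_0=\varepsilon$ and iterating this covering step, I would build a sequence of heavy strings $S_0,S_1,\dots,S_n$ with $|S_i|=i$ (at each step, append one character to either end so that the result is still heavy). But $S_n$ is a single length-$n$ binary string, so $|M(S_n)|\le 1$; for $m\ge 2$ this contradicts $|M(S_n)|>\tfrac{4m}{5}\ge \tfrac{8}{5}$. The degenerate cases $m\le 1$ can be handled separately, since the lemma is either vacuous or holds trivially.

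I expect the main obstacle to be identifying the right covering step. A naive one-sided extension (say, only $S0$ and $S1$) fails because elements of $M(S)$ whose only occurrence of $S$ sits flush against the right boundary of the length-$n$ window escape the union bound; symmetrically for left extensions. Using both sides with all four candidates $0S,1S,S0,S1$ is precisely what is needed to cover every possible position of $S$ inside a length-$n$ string, and this is the step where the factor $4$ (and hence the $\tfrac{1}{5}/\tfrac{4}{5}$ split) enters the argument.
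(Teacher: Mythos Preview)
The paper does not give its own proof of this lemma; it is simply quoted from Skiena and Sundaram, so there is nothing to compare against directly. Your argument is the standard one and is correct: the covering inequality $|M(S)|\le |M(0S)|+|M(1S)|+|M(S0)|+|M(S1)|$ follows exactly as you say (every occurrence of $S$ in a length-$n$ string with $|S|<n$ can be extended on at least one side), and the greedy growth from $\varepsilon$ to a heavy length-$n$ string yields the contradiction for $m\ge 2$.

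One small remark on your closing sentence: the case $m=1$ does \emph{not} hold trivially as stated, since no integer lies in $[\tfrac{1}{5},\tfrac{4}{5}]$; this is a defect of the lemma's formulation (which tacitly assumes $|M|\ge 2$, as is the case in its application in Lemma~\ref{lem:upper}) rather than of your proof. You might simply note that the assertion is only meaningful for $|M|\ge 2$.
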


Lemma \ref{lem:1/5} can be turned into a universal algorithm for determining the substring queries to be asked to the oracle as a function of any given compressor. 

\begin{lemma}\label{lem:upper}
	Let $\mathcal C$ be a compressor, and let $S \in \{0,1\}^n$ be an unknown binary string of known length $n$. 
	Then, there is an algorithm that reconstructs $S$ using $O(|\mathcal C(S)|)$ substring queries.
\end{lemma}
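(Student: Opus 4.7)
The plan is to combine Lemma~\ref{lem:1/5} with a doubling search over a budget $k$ on the compressed size. For each $k$ let
$M_k = \{T \in \{0,1\}^n : |\mathcal{C}(T)| \leq k\}$.
Since $\mathcal{C}$ is injective we have $|M_k| < 2^{k+1}$, and since $\mathcal{C}^{-1}$ is computable, $M_k$ can be enumerated (in exponential time, which is immaterial since the statement only concerns query complexity). The algorithm would try $k = 1, 2, 4, 8, \ldots$ in turn; for each $k$ it enumerates $M_k$ and then processes it as described below.

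Given a current set $M_k$ with $|M_k| \geq 2$, I would invoke Lemma~\ref{lem:1/5} to obtain a substring $s$ with $|M_k|/5 \leq |M_k(s)| \leq 4|M_k|/5$, query the oracle on $s$, and update $M_k \leftarrow M_k(s)$ or $M_k \leftarrow M_k \setminus M_k(s)$ according to the answer. Each such query shrinks $|M_k|$ by a factor of at least $5/4$, so after $O(\log_{5/4} |M_k|) = O(k)$ queries at most one candidate remains. If $M_k = \{T\}$, a single verification query asking whether $T$ occurs in $S$ suffices: since $|T|=|S|=n$, this holds iff $T=S$, so a ``yes'' lets the algorithm return $T$. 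Otherwise (or if $M_k$ became empty) the budget is doubled and the process is repeated.

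Correctness follows because, as soon as $k \geq |\mathcal{C}(S)|$, we have $S \in M_k$ and every oracle answer is consistent with $S$, so $S$ is never discarded by the filtering; the reduction then terminates with $M_k = \{S\}$ and the verification query succeeds. The total number of queries is bounded by $\sum_{j=0}^{\lceil \log |\mathcal{C}(S)| \rceil} O(2^j) = O(|\mathcal{C}(S)|)$, which is exactly the desired bound; the geometric growth of $k$ is essential, since an arithmetic schedule would instead produce a quadratic bound.

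The only real obstacle is that Lemma~\ref{lem:1/5} is phrased as a pure existence statement, so the algorithm must actually produce a splitting $s$. This will be handled by brute-force enumeration over candidate strings of length at most $n$, which is consistent with the exponential running time already tolerated by the statement. A subtle but non-issue point is the verification step: because substring queries become equality tests at length $n$, we never need any additional primitive beyond the oracle we already have.
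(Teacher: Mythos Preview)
Your proposal is correct and follows essentially the same approach as the paper: define $M_k$, bound $|M_k|$ via injectivity of $\mathcal C$, filter it down using Lemma~\ref{lem:1/5} in $O(k)$ queries, and wrap everything in a doubling search on $k$. Your explicit verification query (testing whether the surviving candidate equals $S$ via a length-$n$ substring query) and your remark on effectively producing the splitter of Lemma~\ref{lem:1/5} by brute force are small clarifications that the paper leaves implicit.
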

\begin{proof}
    Let $M_k = \{S \in \{0,1\}^n\ :\ |\mathcal C(S)| \leq k\}$ be the set of strings of length $n$ compressed to at most $k$ bits by $\mathcal C$. Note that $|M_k| \leq 2^{k+1}-2$, since $\mathcal C$ is injective and there are no more than $2^{k+1}-2$ binary strings (compressed representations) of length at most $k$. 
    Assuming we know the value of $\tau = |\mathcal C(S)|$ (later we remove this assumption), it is easy to design an (exponential-time) algorithm that builds $M_\tau$: simply apply $\mathcal C$ to all strings of length $n$, keeping only those such that $|\mathcal C(S)| \leq \tau$. 
    By definition of $\tau$, note that $S\in M_\tau$.
    Then, by applying  recursively Lemma \ref{lem:1/5} starting from the set $M_\tau$, we end up selecting $S$ from this set.
	Each recursive iteration yields a string that we use to perform a substring query on $S$, thereby reducing the number of candidates by a factor 4/5 in the worst case. 
	After $O(\log |M_\tau|) = O(\tau)$ iterations (i.e., substring queries on $S$), we discover which element of $M_\tau$ corresponds to $S$. 
	To conclude, we can remove the assumption that we know $\tau$. To achieve this goal it is sufficient to run an exponential search on the above strategy, i.e., run it on $M_{\tau'}$ for $\tau' = 1, 2, 4, \dots, 2^{\lceil \log_2 \tau\rceil}$. The last iteration will reveal $S$, after a total of $O(\tau)$ substring queries on $S$. 
\end{proof}

We finally prove the following lemma.

\begin{lemma}
Let $A$ be a deterministic adaptive algorithm that reconstructs any string $S$ by asking $\chi(S)$ queries to the oracle, for some (computable) function $\chi(S)$. Then, there exists a compressor $\mathcal C$ (and an associated decompressor $\mathcal C^{-1}$) such that $|\mathcal C(S)| = \chi(S)$.
\end{lemma}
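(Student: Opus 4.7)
The plan is to let $\mathcal{C}(S)$ be the transcript of $A$'s interaction with the oracle on $S$. Concretely, I run $A$ against an oracle that answers each query ``Is $s$ a substring of $S$?'' by actually testing $s$ against the known $S$; I record each yes/no answer as one bit (say $1$ for yes, $0$ for no), concatenating them in the order they occur. Since $A$ is deterministic and adaptive, its $i$-th query is a function solely of the first $i-1$ answers, so the whole execution trace, including the reconstructed output, is determined by this bit sequence. By hypothesis $A$ halts after exactly $\chi(S)$ queries, so $\mathcal{C}(S)$ is a binary string of length exactly $\chi(S)$. Computability of $\mathcal{C}$ is immediate because $A$ is an algorithm and simulating the oracle on the known $S$ is just substring testing.

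For the decompressor $\mathcal{C}^{-1}$, given a bit string $w$, I simulate $A$ \emph{without} any actual oracle: every time $A$ issues its $i$-th query, I feed it the $i$-th bit of $w$ as the oracle's answer. After the last bit of $w$ is consumed, $A$ halts and outputs a string, which I declare to be $\mathcal{C}^{-1}(w)$. When $w = \mathcal{C}(S)$, the simulated answers coincide with the true oracle's answers on $S$, so $A$ follows exactly the same execution path as in the compression phase and hence outputs $S$. This shows $\mathcal{C}^{-1}(\mathcal{C}(S)) = S$, and computability of $\mathcal{C}^{-1}$ follows since it is just a simulation of $A$.

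Injectivity of $\mathcal{C}$ is then a formal consequence of the existence of $\mathcal{C}^{-1}$: if $\mathcal{C}(S_1) = \mathcal{C}(S_2) = w$, then $S_1 = \mathcal{C}^{-1}(w) = S_2$. Finally, $|\mathcal{C}(S)| = \chi(S)$ holds by construction, which is the claimed bound.

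The only mildly subtle point, which I would flag in the write-up, is the synchronization between $A$'s halting and the length of $w$ during decompression: the decompressor does not need to know $\chi(S)$ in advance, because $A$ itself decides when to stop issuing queries, and by that moment exactly $|w|$ bits have been consumed. Thus no self-delimiting prefix or length header is required, keeping the compressed size exactly $\chi(S)$ rather than $\chi(S) + O(\log \chi(S))$.
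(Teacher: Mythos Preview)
Your proof is correct and follows essentially the same approach as the paper: encode $S$ as the bit sequence of the oracle's $\chi(S)$ yes/no answers during $A$'s reconstruction, and decompress by simulating $A$ while feeding it those bits. Your write-up is in fact more careful than the paper's, explicitly arguing injectivity via $\mathcal C^{-1}$ and noting that no length header is needed because $A$ itself determines when to halt.
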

\begin{proof}
It is straightforward to turn $A$ into a compressor $\mathcal C$: the compressed representation $\mathcal C(S)$ of $S$ is the binary string of length $\chi(S)$ formed by the $\chi(S)$ answers received by the oracle while reconstructing $S$. The $\chi(S)$ answers can be computed by any pattern matching algorithm testing membership of the substrings queried by $A$ in the substring closure of $S$. Similarly, $A$ itself can be turned into a decompressor: by definition of $A$, the $\chi(S)$ answers of the oracle (i.e. the compressed file representation) are sufficient to reconstruct $A$.
\end{proof}

While the above results establish an asymptotic equivalence between the string reconstruction and compression problems, they do not yield time-efficient algorithms for reconstructing a string in time proportional to its compressed size. In the next section we tackle this problem by focusing on particular string compressors. 

\section{Feasible algorithms for the reconstruction}\label{sec:algo}

Let $S$ be a string of length $n$ over an integer alphabet $\Sigma=[1, \ldots, \sigma]$.
A trivial algorithm for reconstructing $S$ with $\sigma(n + 1)$ substring queries is the following~\cite{DBLP:journals/jcb/SkienaS95}: We make queries of single character substrings, so that after at most $\sigma$ queries a new character of $S$ is determined. Let $s$ be a known substring of $S$.
In general, we can increase the length of this known substring by one character by querying on the strings $s\sigma_i$, for every character $\sigma_i$. At least one of these queries must be a substring of $S$, unless $s$ is a suffix of $S$ that has no other occurrences in $S$.
When $s$ can no longer be extended to the right, i.e., $s$ is a suffix of $S$ not appearing elsewhere in $S$, we can continue the process by prepending characters to the known substring $s$, until it can no longer be extended to the left, and the string $S$ is then reconstructed.

This algorithm is optimal up to constant factors due to the following lower bound~\cite{DBLP:journals/jcb/SkienaS95}.

\begin{theorem}\label{thm:lb}(\cite[Thm. 8]{DBLP:journals/jcb/SkienaS95})
In the worst case, $\frac{\sigma n}{4}-O(n)$ substring queries are necessary to reconstruct a string of length $n$.
\end{theorem}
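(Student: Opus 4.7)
The plan is to prove this lower bound via an adversary argument against any deterministic reconstruction algorithm. I would construct a family $\mathcal{F}$ of length-$n$ candidate strings over $\Sigma = \{1,\ldots,\sigma\}$ in which roughly $n/2$ ``free'' positions may independently be filled with any of about $\sigma/2$ alternative characters, while the remaining positions contain a carefully chosen fixed separator pattern. The purpose of the separators is structural: they should guarantee that any substring query that falls strictly inside one slot is short and uninformative, and that any longer query aligns with the separator pattern in essentially only one way, so it probes at most one free slot at a time.

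Once the family is fixed, I would run the standard adversary: before any query, the adversary secretly keeps all strings in $\mathcal{F}$ alive; upon receiving a query $s$, it answers \emph{yes} or \emph{no} so as to keep the larger subfamily alive and commits to discarding the other. The algorithm is forced to continue until a single string remains. The key combinatorial claim to establish is that, thanks to the aperiodic separator design, an ``informative'' query $s$ can disqualify only a constant number of alternative characters at a single slot (essentially the endpoints of $s$ inside that slot), and disqualifies nothing at the other free slots; queries not compatible with the separator layout can be answered \emph{no} while ruling out nothing. This reduces the analysis to a slot-by-slot uncertainty count.

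With this claim in hand, the counting is direct: each of the $\Theta(n/2)$ free slots carries $\Theta(\sigma/2)$ possible characters, and an adversary strategy forces $\Omega(\sigma/2)$ queries per slot to pin its value down, for a total of $\Omega(\sigma n/4)$ queries; the additive $O(n)$ loss absorbs the overhead for identifying the alphabet and learning the separator positions, yielding the stated $\sigma n/4 - O(n)$ bound. Strengthening the claim to account for the variant in which the oracle also returns the number of occurrences is a short extra step, since each additional ``count'' answer still splits the slot alternatives by at most a constant.

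The main obstacle is the design of the separator pattern: it must simultaneously be long enough that a query including it locks into at most one alignment, short enough not to blow up the overhead from $O(n)$, and structured so that no accidental substring of the free alphabet can forge a separator occurrence elsewhere. I would expect to use a fixed aperiodic word built from one or two symbols outside the ``free'' alphabet of each slot, together with a case analysis on where a generic query $s$ can match; the bookkeeping of these cases is where the proof concentrates its technical weight, and ultimately pins down the constant $1/4$ rather than some smaller fraction.
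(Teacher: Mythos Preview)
The paper does not prove this theorem: it is stated as Theorem~\ref{thm:lb} with an explicit citation to \cite[Thm.~8]{DBLP:journals/jcb/SkienaS95} and is used as a black box to justify optimality of the trivial algorithm and to derive Theorem~\ref{thm:lblz}. There is therefore nothing in the present paper to compare your proposal against.

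That said, your sketch is essentially the strategy of the original Skiena--Sundaram proof: split the alphabet in two halves, reserve one half for separator blocks and the other for roughly $n/2$ free positions, and run an adversary that answers each query so as to keep the larger surviving family. The combinatorial heart---that any single substring query, once aligned against the separator scaffold, can eliminate at most a constant number of alternatives at one free position---is exactly the lemma that pins down the $1/4$ constant. One point you gloss over is the asymmetry between \emph{yes} and \emph{no} answers: a \emph{no} answer is easy (it rules out one pattern), but a \emph{yes} answer could in principle fix many free positions at once if the query overlaps several slots, so the adversary must be designed to answer \emph{yes} only when the query is short enough to touch at most one slot (or is forced by the separators). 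Making this precise is where the separator length and aperiodicity really matter, and the original paper handles it by choosing separators of length $\Theta(\log_\sigma n)$ so that longer queries have a unique alignment while the total separator overhead stays $O(n)$. Your outline is correct in spirit but would need this case analysis spelled out to become a proof.
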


In the rest of this section, we will provide algorithms for reconstructing the string $S$ whose efficiency is measured towards commonly used measures of compression for strings. 

Let us first show an easy result for the size \rle\ of the run-length encoding of $S$,
i.e., \rle\ is the number of runs (maximal repetitions of the same character) in $S$.
We show that $S$ can be reconstructed with $O(\rle (\sigma + \log \frac{n}{\rle}))$ queries. The reconstruction is done in \rle\ steps.
Let $\rS{i-1}$ be the substring reconstructed so far. In the $i$th step, we first identify the 
character $c$ that follows $\rS{i-1}$ in $S$. This is done by querying $\rS{i-1}\cdot c$
for any $c\in \Sigma$. Once we know $c$, we need to identify the length of the run of $c$, i.e., the maximal value $r_i$ such that $\rS{i-1}\cdot c^{r_i}$ is a substring of $S$.
This can be done with an exponential search on $r_i$, which takes $\Theta(\log r_i)$ queries.
When at some step $j$, $\rS{j}$ cannot be extended further, we continue the process by prepending (runs of) characters to the known substring.

The overall number of queries is $q=O(\sum_{i=1} ^{\rle} (\sigma + \log r_i))$. 
This is in $O(\rle (\sigma + \log \frac{n}{\rle}))$ queries because the sum of the terms $\log r_i$ is
maximized when every $r_i$ is in  $\Theta(
\frac{n}{\rle})$.

\begin{theorem}
Any string of length $n$ with \rle\ runs can be reconstructed with $q=O(\rle (\sigma + \log \frac{n}{\rle}))$ substring queries in $O(q)$ time and $O(\rle)$ space.
\end{theorem}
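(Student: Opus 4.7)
The statement follows from the construction already sketched in the paragraph preceding it, so my plan is to formalize that construction and then carry out the two quantitative estimates (queries and time/space).

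First I would make the reconstruction loop precise. Maintain a known substring $\rS{i-1}$ of $S$, initially empty, and represent it compactly as a list of $i-1$ pairs $(c_j, r_j)$ of characters and run lengths, so the storage is $O(\rle)$ words rather than $O(n)$. At step $i$ (extending to the right), first identify the character $c$ that follows $\rS{i-1}$ in $S$ by issuing the query $\rS{i-1}\cdot c$ for each $c\in \Sigma$ (and the character $c$ must differ from the last character of $\rS{i-1}$, so in fact at most $\sigma-1$ queries are needed). If none of the $\sigma$ queries is positive, then $\rS{i-1}$ is a suffix of $S$ with no other occurrence and we switch to the symmetric left extension; otherwise, let $c$ be the positive answer. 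Next, determine the run length $r_i$ by exponential search, i.e., query $\rS{i-1}\cdot c^{2^k}$ for $k=0,1,2,\ldots$ until the oracle answers no, then binary search on the last interval. This uses $O(\log r_i)$ queries and determines $r_i$. Append $(c, r_i)$ to the compact representation and continue. When the right extension stops, the left extension proceeds symmetrically, querying single-character prefixes and then run lengths by exponential search; this phase terminates when no prefix extension is possible, at which point $\rS{i}=S$.

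For the query count, each of the $\rle$ steps costs $\sigma + O(\log r_i)$ queries, so the total is
\[
q = O\!\left( \rle\,\sigma + \sum_{i=1}^{\rle} \log r_i \right).
\]
Since $\sum_i r_i = n$ and $\log$ is concave, Jensen's inequality gives $\sum_i \log r_i \le \rle \log(n/\rle)$, so $q = O(\rle(\sigma + \log\tfrac{n}{\rle}))$, as claimed. By the standing assumption that each oracle query takes constant time, the total running time is $O(q)$, and since the reconstructed string is kept in its run-length representation with $\rle$ pairs, the working space is $O(\rle)$.

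The only subtle point (and the closest thing to an obstacle) is making sure that none of the steps silently uses $\Theta(|\rS{i-1}|)$ time or space to form each query. This is handled by the paper's convention that the oracle accepts queries in constant time regardless of their length, and by keeping $\rS{i-1}$ in run-length form so that appending a new run is $O(1)$ and constructing a new query description requires only pointing to the current compact representation and the proposed extension. With this bookkeeping the bounds on $q$, time, and space all hold simultaneously.
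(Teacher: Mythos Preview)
Your proposal is correct and follows essentially the same approach as the paper: identify the next run's character with at most $\sigma$ queries, determine its length by exponential search in $O(\log r_i)$ queries, switch to left extension once right extension fails, and bound $\sum_i \log r_i$ by concavity of $\log$ (the paper phrases this as the sum being maximized when all $r_i$ are $\Theta(n/\rle)$, which is the same Jensen argument). Your write-up is in fact more careful than the paper's sketch in justifying the $O(\rle)$ space bound via the run-length representation and in explicitly invoking the constant-time query convention for the $O(q)$ time bound.
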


Note that, accordingly to Theorem~\ref{thm:lb}, this result is optimal up to a constant factor for sufficiently large $\sigma$. 

Our next aim is to give algorithms whose complexity grows as a function of the size of the LZ77 parsing of $S$. We let $z_{no}$ denote the number of phrases of the LZ77 parsing when the parse does not allow overlapping phrases (both settings are commonly considered in the literature).

We can use Theorem~\ref{thm:lb} to prove a lower bound on $\chi(S)$ in terms of $z_{no}$. 

\begin{theorem}\label{thm:lblz}
In the worst case,  $\Omega(\sigma z_{no} {\log_\sigma n})$ substring queries are necessary to reconstruct a string of length $n$.
\end{theorem}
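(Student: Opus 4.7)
The plan is to deduce the bound from Theorem~\ref{thm:lb} by a padding reduction. I would fix $m = \Theta(z_{no}\log_\sigma n)$ so that the length-based bound $\Omega(\sigma m)$ delivered by Theorem~\ref{thm:lb} already equals the target $\Omega(\sigma z_{no}\log_\sigma n)$, and then embed a worst-case length-$m$ string inside a length-$n$ host whose LZ77-without-overlaps parse contains only $O(z_{no})$ phrases.

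By Theorem~\ref{thm:lb} some string $T$ of length $m$ over an alphabet of size $\sigma$ requires $\Omega(\sigma m)$ queries to reconstruct. A probabilistic reading of the adversary argument of Skiena and Sundaram actually shows that a constant fraction of length-$m$ strings meet this bound, while a uniformly random string of length $m$ has $z_{no} = \Theta(m/\log_\sigma m)$ with probability $1 - o(1)$; a union bound then delivers a concrete $T$ that is both hard and satisfies $z_{no}(T) = O(z_{no})$. Now define $S = T \cdot c^{n-m}$ for some arbitrary character $c \in \Sigma$. Since the LZ77 parse without overlaps of a unary run $c^k$ consists of only $\Theta(\log k)$ phrases (each new phrase can at most double the length of the preceding one), the padding contributes $O(\log n)$ extra phrases, so $z_{no}(S) = z_{no}(T) + O(\log n) = O(z_{no})$ in the regime $z_{no} = \Omega(\log n)$. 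Any algorithm reconstructing $S$ must in particular determine its prefix $T$, hence the lower bound $\Omega(\sigma m) = \Omega(\sigma z_{no}\log_\sigma n)$ transfers to $S$.

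The main obstacle is the joint-existence step: one must certify that a hard string of length $m$ promised by Theorem~\ref{thm:lb} can simultaneously be chosen incompressible, i.e.\ $z_{no}(T) = O(m/\log_\sigma m)$. The probabilistic argument sketched above is the cleanest route, but it requires either inspecting the adversary construction inside the proof of Theorem~\ref{thm:lb} or reproving a Yao-style minimax version of that bound so as to apply it against a distribution concentrated on incompressible strings. A secondary, minor issue is the small-$z_{no}$ regime $z_{no} = O(\log n)$, where the additive $O(\log n)$ contribution of the padding cannot be absorbed into $O(z_{no})$; this case can be handled separately by instantiating Theorem~\ref{thm:lb} at $m = \Theta(\log n)$ and padding as before, which yields the claimed asymptotic bound up to constants.
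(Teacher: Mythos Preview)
You are working much harder than the paper does. The paper's proof is a one-liner: for \emph{every} string of length $n$ one has $z_{no} = O(n/\log_\sigma n)$, hence $\sigma z_{no}\log_\sigma n = O(\sigma n)$. The worst-case string furnished by Theorem~\ref{thm:lb} already requires $\Omega(\sigma n)$ queries, and since $\sigma z_{no}\log_\sigma n$ is (up to constants) at most $\sigma n$ for that same string, this is automatically $\Omega(\sigma z_{no}\log_\sigma n)$. No padding, no probabilistic argument, no joint-existence step is needed.

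The reason your route is so much heavier is that you are implicitly proving a stronger, \emph{parametrized} statement: for each target value of $z_{no}$ (and $n$), exhibit a string with roughly that many phrases that is hard to reconstruct. The theorem as stated is only a worst-case claim --- exhibit \emph{one} hard string and express the lower bound in terms of its $z_{no}$ --- and for that the universal upper bound on $z_{no}$ suffices. Your padding construction and the probabilistic joint-existence argument would be relevant if one wanted the stronger parametrized version, but even then the gap you flag (that the Skiena--Sundaram adversary string can be taken incompressible) is real and would need to be closed; it is not needed at all for the theorem as written.
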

\begin{proof}
It is well known that for any string $z_{no} = O(\frac{n}{\log_\sigma n})$.
The theorem follows by combining this fact with Theorem~\ref{thm:lb}.
\end{proof}

We are not required to know the length $n$ of $S$, but we assume to know its alphabet $\Sigma=[1, \ldots, \sigma]$. If instead also the alphabet is unknown, we need $O(\log \sigma)$ queries to identify the largest character in $S$. This is done by performing 
an exponential search to identify the largest character occurring in $S$. Notice that this 
is correct only if all the characters in $\Sigma$ occur in $S$ (in particular, $\sigma \leq n$), which we assume as hypothesis.

Our goal is to prove the following theorem.

\begin{theorem}\label{thm:lz}
Let $S$ be a string of length $n$ over the alphabet $\Sigma=[1,\ldots, \sigma]$. 
There exists an algorithm that reconstructs $S$ with $ q = O(\sigma z_{no}\log (n/z_{no})\log n)$ substring queries 
to the oracle.  The algorithm runs in $O(n(\log n + \log \sigma)+ q)$ time using linear space.
\end{theorem}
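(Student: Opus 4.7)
The plan is to reconstruct $S$ left-to-right, one LZ77-without-overlap phrase at a time. I would maintain the reconstructed prefix $\rS{i}$ together with its suffix tree $T_i$, updated online via Ukkonen's algorithm in $O(n\log\sigma)$ total time. At each stage, the next phrase is either a fresh character---identified with at most $\sigma$ queries of the form ``is $\rS{i}\cdot c$ a substring of $S$?''---or the longest substring $w$ of $\rS{i}$ (with $|w|\le i$) such that $\rS{i}\cdot w$ is still a substring of $S$. Because $\rS{i}$ is a prefix of $S$, the latter case amounts to locating the deepest \locus\ in $T_i$ whose root-to-node label, prefixed by $\rS{i}$, remains a substring of $S$; the only technicality is disambiguating the situation where $\rS{i}$ itself has multiple occurrences in $S$, which can be handled by appending a short distinguishing suffix in each query at constant-factor overhead.

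For the extension step I would use the centroid decomposition of $T_i$, whose recursion tree has depth $O(\log n)$. A single search for the deepest satisfying \locus\ proceeds top-down through the decomposition: at each of the $O(\log n)$ levels, $O(\sigma)$ oracle queries suffice both to choose which of the up-to-$\sigma$ \child\ of the current centroid to follow and to decide into which half of the decomposition to descend, for a per-search cost of $O(\sigma\log n)$ queries. To simultaneously determine the length $\ell_i$ of the current phrase, I would interleave this search with an exponential search on $\ell_i$, yielding a per-phrase cost of $O(\sigma\,\log \ell_i\,\log n)$ queries.

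Summing over the $z_{no}$ phrases, the constraint $\sum_i \ell_i = n$ together with concavity of $\log$ gives $\sum_{i=1}^{z_{no}}\log \ell_i \le z_{no}\log(n/z_{no})$, so the total is $q\in O(\sigma z_{no}\log(n/z_{no})\log n)$, matching the claim. For the running time, Ukkonen's algorithm supplies $T_i$ in $O(n\log\sigma)$ total, amortized maintenance of the centroid decomposition of the evolving suffix tree contributes $O(n\log n)$, and the query bookkeeping adds $O(q)$, giving the announced $O(n(\log n+\log\sigma)+q)$ time with $O(n)$ space for $T_i$, its decomposition, and $\rS{i}$.

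The main obstacle I expect is showing that the coupling between the centroid-decomposition search and the exponential search on $\ell_i$ truly attains $O(\sigma\log\ell_i\log n)$ queries per phrase: one must interleave ``does the current match extend by another $2^k$ characters?'' probes with centroid-guided navigation in $T_i$ without double counting, and ensure that a positive answer at a given centroid really implies the claimed containment rather than a spurious occurrence of $\rS{i}\cdot w$ elsewhere in $S$. A secondary subtlety is handling the dynamic centroid decomposition under insertions to $T_i$, which should be done amortized to respect the $O(n\log n)$ time bound on decomposition upkeep.
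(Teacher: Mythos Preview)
Your sketch assembles the right machinery (Ukkonen's online suffix tree, centroid decomposition, amortized dynamic maintenance), but it misses the central difficulty of the \emph{substring} model and therefore arrives at the $\log(n/z_{no})$ factor for the wrong reason. You assume throughout that $\rS{i}$ is a prefix of $S$, so that the next right-extension is exactly the $(i{+}1)$-th LZ77 phrase of $S$. With substring queries this is unjustified: the very first character you learn is an arbitrary letter of $S$, and thereafter $\rS{i}$ is only known to be a substring. Eventually no $\rS{i}\cdot c$ occurs in $S$; this tells you that $\rS{i}$ is a suffix of $S$, possibly a proper one, and you must now extend to the left. Your ``short distinguishing suffix'' fix is circular, since the suffix you would append is precisely the unknown portion you are still trying to discover; you even flag this spurious-occurrence issue yourself in the last paragraph without resolving it.

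The paper handles this by running the centroid-based extension forward until stuck and then backward on the reversal. The key point is that the phrases generated in the backward pass are LZ77 phrases of a \emph{reversed} substring and are not obviously bounded by $z_{no}(S)$. Instead, both passes produce $O(g)$ phrases, because the smallest-SLP size $g$ is reversal-invariant and upper-bounds the LZ77 size; this yields $O(\sigma g\log n)$ queries, and only then does the relation $g\in O(z_{no}\log(n/z_{no}))$ give the stated bound. Thus the $\log(n/z_{no})$ factor does not come from your concavity argument on $\sum_i\log\ell_i$. Relatedly, the per-phrase cost is simply $O(\sigma\log n)$: one descent in the centroid tree to locate the deepest matching node, followed by one exponential search along a single edge. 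Your interleaving to obtain $O(\sigma\log\ell_i\log n)$ per phrase is both unnecessary and not the mechanism by which the extra logarithm enters the final bound.
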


Note that this result is optimal up to a factor $O(\log \frac{n}{z_{no}} \log \sigma)$  by Theorem~\ref{thm:lblz}.

In the next subsection we review a technique to solve pattern matching queries 
on a text which exploits the centroid decomposition of the suffix tree of a string. 
This will allow us to give an efficient algorithm for reconstructing the suffix tree of $S$, from which $S$ is therefore determined.

\paragraph*{Pattern matching with the centroid decomposition}
 
This technique has been introduced by Naor~\cite{naor91} and has found applications, 
for example, in designing cache-oblivious string B-trees~\cite{BenderFK06,FerraginaV16} or randomized pattern matching \cite{amir92} on a dictionary of strings.

The {\em centroid decomposition} of a tree $\mathcal T$ (also known as {\em separator decomposition}) is a popular and powerful technique to obtain a tree \centroidT\ of logarithmic height.
The decomposition is based on a theorem proved by Jordan in 1869 \cite{Jordan1869}. 

\begin{lemma}
Any tree $\mathcal T$ of $n$ nodes has at least a node, called centroid, whose removal leaves connected components of size at most $n/2$.
\end{lemma}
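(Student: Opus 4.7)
The plan is to prove this classical theorem of Jordan by a potential-minimization argument. For each node $v$ of $\mathcal T$, I would define $f(v)$ to be the maximum size of a connected component of $\mathcal T \setminus \{v\}$. Let $v^*$ be any node minimizing $f$ over $V(\mathcal T)$ (such a node exists since $\mathcal T$ is finite). I would then show that $f(v^*) \leq n/2$, so that $v^*$ qualifies as a centroid.

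The argument proceeds by contradiction. Suppose instead that $f(v^*) > n/2$. Then there is a neighbor $u$ of $v^*$ such that the component $T_u$ of $\mathcal T \setminus \{v^*\}$ containing $u$ satisfies $|T_u| > n/2$; since $T_u$ is one of the components whose maximum size defines $f(v^*)$, we also have $f(v^*) \geq |T_u|$. The key step is to compare $f(u)$ and $f(v^*)$. Removing $u$ from $\mathcal T$ partitions the remaining nodes into: the component containing $v^*$, of size $n - |T_u| < n/2$; and one component per neighbor $w \neq v^*$ of $u$, each of which is contained in $T_u \setminus \{u\}$ and therefore has size at most $|T_u| - 1$. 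Consequently $f(u) < |T_u| \leq f(v^*)$, contradicting the minimality of $v^*$. Hence $f(v^*) \leq n/2$ and $v^*$ is a centroid.

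The only real subtlety, which I would flag as the main (mild) obstacle, is choosing the right monovariant. A natural greedy approach would start at an arbitrary node and walk toward the unique ``heavy'' neighbor whenever $f(v) > n/2$; for such a walk to terminate, one needs a quantity that strictly decreases along it, and the potential $f$ defined above serves exactly this role via the size inequalities in the preceding paragraph. Once the monovariant is fixed, uniqueness of the heavy neighbor (when it exists) and termination are both immediate, and the rest of the proof is bookkeeping on the two types of components obtained by deleting $u$.
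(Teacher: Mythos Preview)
Your proof is correct. The paper does not actually prove this lemma: it is stated as Jordan's 1869 theorem and left without proof. Immediately afterward, however, the paper describes the folklore linear-time algorithm for \emph{finding} a centroid --- root $\mathcal T$ arbitrarily, compute subtree sizes, and repeatedly descend into the largest subtree until all subtrees have size at most $n/2$ --- which is exactly the greedy walk you mention in your final paragraph. Your potential-minimization argument is the standard existence proof underlying that algorithm (the monovariant $f$ is what guarantees the walk terminates), so your approach and the paper's implicit one coincide.
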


The centroid decomposition is defined recursively.  Given $\mathcal T$, we identify a centroid node $u$, which is chosen to be the root of the new rooted tree \centroidT . Then, we remove $u$ from $\mathcal T$ and recurse on each connected component to get $u$'s subtrees in \centroidT . The children of $u$ in \centroidT\ are the roots
of the centroid decompositions of these components. Let us use $\child_{\centroidT}(u)$ to denote the set of children of $u$ in \centroidT .
As we have a (possibly empty) component for $u$'s parent and children in $\mathcal T$, the outdegree of $u$ in \centroidT, i.e., $|\child_{\centroidT}(u)|$,  is at most the outdegree of $u$ in $\mathcal T$ plus one.
The resulting decomposition is a new tree \centroidT\ on the same nodes, whose height is $\Theta(\log n)$.

A folklore algorithm computes the centroid decomposition in $\Theta(n\log n)$ time as follows.
We first observe that a centroid node of $\mathcal T$ can be easily identified in linear time.
Indeed, we can arbitrary choose a root in $\mathcal T$ and visit the tree to compute the size
of each subtree. Then, we start from the root and move to the largest subtree until we reach a
node whose subtrees have size at most $n/2$.  This node is a centroid of the tree.
The centroid decomposition is computed by repeating the above algorithm recursively in each component. It easily follows that the decomposition of the tree can be computed in $\Theta(n\log n)$ time.
However, there exist construction algorithms to compute the decomposition in linear time \cite{BrodalFPO01,SPIRE19}. 

In the following, we will use the centroid decomposition \stcentroidT\ of the suffix tree \sufft\ of a string $S$. Given a node $u$ in \sufft , we use $\locus(u)$ to denote its locus, i.e., the string obtained by concatenating the sequence of labels encountered along the path from the root to $u$.

Assume we are given a pattern $P[1,p]$ and our goal is to find the suffix of $S$
that shares the longest common prefix with $P$. This problem can be easily solved with the suffix tree \sufft\ of $S$ with the following two-phase strategy: We first identify the highest node $u^*$ in \sufft\ such that $\locus(u^*)$ shares the longest common prefix with $P$.
Then, we try to extend the match by comparing the remaining characters of $P$ with the characters on the edge between $u^*$ and one of its children, i.e., the child where the label starts with the character $P[|\locus(u^*)|+1]$.

We can perform the same search for $u^*$ on the centroid decomposition \stcentroidT\ of the suffix tree of $S$. 
The search is done by traversing a root-to-node path of $O(\log n)$ nodes.
We start from the root of \stcentroidT\ and we move down to the leaves. 
For every node $u$ we visit, we compare $\locus(u)$ with $P$ and decide in which of its children we have to continue the search. As the target node $u^*$ is guaranteed to be  visited, we simply take track of the visited node sharing the longest common prefix with $P$. 
Based on the result of comparing $\locus(u)$ and $P$, there are the following cases:

\begin{itemize}
    \item If $\locus(u)$ equals $P$, then $u$ is our target node $u^*$ and we conclude.
    \item If $\locus(u)$ is not a prefix of $P$, we continue to search on the child of $u$ which corresponds to the connected component containing the parent of $u$ in the suffix tree. If such a node does not exist, we conclude. 
    \item If $\locus(u)$ is a prefix of $P$, 
    we continue the search on the connected component containing one of the children of $u$ in the suffix tree.
    The child is the node $v$ such that the first character of the edge between $u$ and $v$ equals the character $P[|\locus(u)|+1]$. This is exactly the node $v$ that a normal search on the suffix tree would visit next, once the search reaches $u$. Notice that in general $v$ is not a child of $u$ in the centroid decomposition. If such a node does not exist, we finish the visit.
\end{itemize}

Let us now show how to use the above algorithm to reconstruct an unknown string $S$.

\paragraph*{Solution with prefix queries to the oracle}
We first describe our algorithm for querying the oracle in an easier setting. 
Instead of answering substring queries, the oracle answers {\em prefix queries}:
given a string $P$, the oracle tells us whether $P$ is a prefix of the unknown string $S$.
This model is stronger because it allows us to remain anchored to the beginning of $S$ while 
reconstructing it.\footnote{An oracle for prefix queries can be obtained from an oracle for substring queries if we assume that $S$ begins with a special character $\$$ not belonging to $\Sigma$.} A direct consequence is that the algorithm is easier and faster. 

We now describe how to reconstruct a string $S$ with $\Theta(\sigma z_{no} \log n)$ prefix queries to the oracle. 

Our algorithm works in steps. In the $i$-th step it reconstructs the $i$-th LZ77 phrase 
$\lz{i}$. Once $\lz{i}$ is reconstructed, the algorithm knows the string $\rS{i}$, which is the  concatenation of all the phrases reconstructed so far. 
Observe that $\lz{i}$ is the longest substring of
$\rS{i-1}$ such that the prefix query $Q_i = \rS{i-1} \cdot \lz{i}$ is answered affirmatively.

The phrase $\lz{i}$ is identified with $O(\sigma(\log |\rS{i-1}| + 1))$ prefix queries as 
follows. Assume we have the suffix tree \sufft\ of $\rS{i-1}$ and its centroid decomposition \stcentroidT. 
Our first goal is to identify the lowest node $u^*$ in \sufft\ such that $\rS{i-1} \cdot 
\locus(u^*)$ is a prefix of $S$. This can be done by performing a search for the unknown 
pattern $P=\locus(u^*)$ on \stcentroidT. 
Even if $u^*$ is unknown, the search can be performed correctly. 
 Indeed, observe that $u^*$ and all its ancestors in \sufft\ are the only nodes $u$ 
such that $\rS{i-1} \cdot \locus(u)$ is a prefix of $S$. Thus, 
we perform the search on the centroid tree by binary searching for $u^*$ on root-to-$u^*$ path. 
The cost of the search is $O(\sigma(\log |\rS{i-1}| + 1))$ prefix queries.
Indeed, we need to visit $O(\log |\rS{i-1}| + 1)$ nodes of \stcentroidT\ to identify 
$u^*$. For each visited node $u$, we need a query to check if $\rS{i-1} \cdot \locus(u)$ is a prefix of $S$. If this is the case, at most $\sigma$ queries of the form $\rS{i-1} \cdot 
\locus(u) \cdot c$, with $c\in \Sigma$, are needed to know in which child of $u$ we have to continue our search. Otherwise, we move to the component
containing the parent of $u$, if any.

Once we know $u^*$, we have to extend $\locus(u^*)$ to match $\lz{i}$. 
Indeed, $\lz{i}$ may end up in the middle of the edge from $u^*$ to one of its children, say 
$v$, in \sufft. 
This step can be easily done with $O(\sigma + \log |\lz{i}|)$ queries. 
First, we use $O(\sigma)$ queries to identify the child $v$ of $u$, then we perform 
an exponential search on the length of the edge label. 

We conclude by proving that the reconstruction of $S$ takes $O(n\log n + n\log \sigma)$ time. 
A trivial implementation of our algorithm consists in rebuilding at each step $i$ the suffix tree of string $\rS{i}$ and its centroid decomposition from scratch. This takes quadratic time. 

A faster algorithm is the following:
First, we observe that, as the string is reconstructed from left to right, we can use the Ukkonen's construction of the suffix tree~\cite{Ukkonen95}. This construction builds the suffix tree in $O(n \log \sigma)$ time and linear space. It is an online algorithm that processes the string from left to right, hence it
allows us to build the suffix tree of the prefix of the string that we have already reconstructed.

The centroid decomposition of the suffix tree is instead kept updated dinamically. 
Brodal et al.~\cite{BrodalFPO01} showed how to keep an approximation of the centroid decomposition of a tree subject to insertions of new nodes in $O(\log n)$ amortized time per insertion. The decomposition is approximated in the sense that each selected centroid node splits the tree in connected components having a fraction $\frac{1}{2} + \epsilon$ of the overall tree dimension, for any $0 < \epsilon < \frac{1}{4}$. The height of the \centroidT\ is still $O(\log n)$, thus this approximated decomposition suffices for our purposes.

\paragraph*{Solution with substring queries to the oracle}
The string $S$ can be reconstructed with substring queries using an easy variant of the above algorithm. 
The algorithm reconstructs (a portion of) the string exactly as described above, until the string reconstructed so far, say $\rS{i}$, cannot be further extended to the right,  hence we know that 
$\rS{i}$ is a suffix of $S$. Then, we start extending $\rS{i}$ 
from its beginning, proceeding backwards (that is, prepending characters to $\rS{i}$).
More formally, our strategy first queries forward strings and builds the suffix tree and the LZ77 factorization of some suffix $S[i,n]$ of the string.  Then, we build the suffix tree of $\overleftarrow{S[i,n]}$ and proceed backwards, building the LZ77 factorization of the remaining portion $\overleftarrow{S[1,i-1]}$. 
Since the size $g$ of the smallest grammar is invariant under reversals and upper-bounds the number of Lempel-Ziv phrases, in both phases we generate at most $O(g)$ phrases. The following theorem is therefore immediate.

\begin{theorem}
Let $S$ be a string of length $n$ over the alphabet $\Sigma=[1,\ldots, \sigma]$. 
There exists an algorithm that reconstructs $S$ with $q = O(\sigma g\log n)$ substring queries to the oracle.  The algorithm runs in $O(n(\log n + \log \sigma)+ q)$ time using linear space.
\end{theorem}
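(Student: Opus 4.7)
The plan is to formalize the two-phase reconstruction sketched immediately above the theorem statement. In the forward phase, given the substring $\rS{i-1}$ of $S$ already reconstructed, I identify the next phrase $\lz{i}$ by descending in the centroid decomposition $\stcentroidT$ of the suffix tree $\sufft(\rS{i-1})$: at each visited centroid node $u$, I use substring queries of the form ``is $\rS{i-1}\cdot \locus(u)\cdot c$ a substring of $S$?'' (for $c\in\Sigma$) to decide the next centroid child, and I conclude with an exponential search along the edge of $\sufft(\rS{i-1})$ leaving the deepest locus $u^*$ that still admits an extension inside $S$. The phase ends when $\rS{i}$ cannot be extended by any single character; at that instant every occurrence of $\rS{i}$ in $S$ must coincide with the end of $S$, so $\rS{i}$ is a suffix $S[j,n]$. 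For the backward phase, I reverse $\rS{i}$, build $\sufft(\overleftarrow{\rS{i}})$ together with its centroid decomposition, and rerun the same procedure on the reversed outstanding prefix $\overleftarrow{S[1,j-1]}$, prepending each newly discovered block to the current reconstruction.

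Each discovered phrase costs $O(\sigma \log n)$ substring queries: the centroid search visits $O(\log n)$ nodes at $O(\sigma)$ branching queries each, and the final exponential search along the edge of $\sufft$ leaving $u^*$ adds $O(\sigma + \log n)$. As argued in the paragraph preceding the theorem, the total number of phrases generated across the two phases is $O(g)$, using $z_{no} \leq g$ together with the invariance of $g$ under string reversal (so the bound applies equally to the reversed outstanding prefix). Multiplying yields $q = O(\sigma g \log n)$ substring queries.

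For the running time, I keep the suffix tree of the reconstructed portion up to date online using Ukkonen's construction, spending $O(n\log \sigma)$ total time and linear space. The approximate centroid decomposition is maintained dynamically by the method of Brodal et al.~in $O(\log n)$ amortized time per node insertion, contributing $O(n\log n)$ overall. The single reversal at the transition between the two phases costs an additional $O(n\log \sigma)$ to rebuild the suffix tree on $\overleftarrow{\rS{i}}$, and an $O(q)$ additive term covers the issuing and processing of queries. Altogether this yields $O(n(\log n+\log\sigma)+q)$ time within linear space, matching the theorem. The main obstacle I expect is justifying that substring (rather than prefix) queries still drive the centroid search correctly: one must argue that the existence of an extension of $\rS{i-1}$ inside $S$, certified by a substring query, coincides with the defining property of the next LZ77-no-overlap phrase, which holds precisely because that phrase is the longest continuation substring of $\rS{i-1}$ still occurring somewhere in $S$ (exactly what substring queries test). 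A secondary concern is the initialization, where $\rS{0}$ is empty and a seed character must be found by $O(\sigma)$ queries, and the handoff to the backward phase, where one must reset the centroid machinery on $\overleftarrow{\rS{i}}$; neither step perturbs the asymptotic bound.
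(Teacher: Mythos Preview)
Your proposal is correct and follows essentially the same approach as the paper: a forward centroid-driven phase until the reconstructed string can no longer be extended to the right (hence is a suffix of $S$), followed by a symmetric backward phase on the reversal, with the phrase count in both phases bounded by $g$ via $z_{no}\leq g$ and the reversal-invariance of $g$. You even make explicit a couple of points the paper leaves implicit, namely the seeding by $O(\sigma)$ single-character queries and the fact that substring queries may yield phrases longer than the canonical LZ77 phrases (which only helps the phrase count).
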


Finally, Theorem~\ref{thm:lz} follows from the well-known bound $g\in O(z_{no}\log(n/z_{no}))$ (see also Navarro~\cite{navarro2020indexing}).

\paragraph*{Running example}

\begin{figure}
     \begin{subfigure}[b]{0.45\textwidth}
          \centering
          \resizebox{\linewidth}{!}{\begin{tikzpicture}[
level distance=15mm,
level 1/.style={sibling distance=40mm},
level 2/.style={sibling distance=30mm},
level 3/.style={sibling distance=20mm},
level 4/.style={sibling distance=20mm},
]

\tikzstyle{c} = [draw, shape=circle, minimum width=9mm]
\tikzstyle{label} = [draw=white, shape=rectangle,minimum width=10mm, fill=white, font=\footnotesize]

\node[c]{0}[edge from parent]
    child {node[c] {1}
        child {node[c]{4}
          child {node[c]{10}
            edge from parent coordinate (e10);
            }
          child {node[c]{11}
            edge from parent coordinate (e11);
            }
          edge from parent coordinate (e4);
          }
        child {node[c]{5}
          child {node[c]{12}
            edge from parent coordinate (e12);
            }
          child {node[c]{13}
            child {node[c]{18}
              edge from parent coordinate (e18);
              }
            child {node[c]{19}
              edge from parent coordinate (e19);
              }
            edge from parent coordinate (e13);
            }
            edge from parent coordinate (e5);
          }
        edge from parent coordinate (e1);
    }
    child {node[c] {2}
        child {node[c]{6}
          edge from parent coordinate (e6);
          }
        child {node[c]{7}
          child {node[c]{14}
            edge from parent coordinate (e14);
            }
          child {node[c]{15}
            edge from parent coordinate (e15);
            }
          edge from parent coordinate (e7);
          }
        edge from parent coordinate (e2);
    }
    child {node[c] {3}
        child {node[c]{8}
          edge from parent coordinate (e8);
        }
        child {node[c]{9}
          child {node[c]{16}
            edge from parent coordinate (e16);
            }
          child {node[c]{17}
            edge from parent coordinate (e17);
            }
          edge from parent coordinate (e9);
        }
        edge from parent coordinate (e3);
    };
\node[label] at(e1)  {$[1,1]$};
\node[label] at(e2) {$[4,6]$};
\node[label] at(e3) {$[5,6]$};
\node[label] at(e4) {$[1,2]$};
\node[label] at(e5) {$[3,6]$};
\node[label] at(e6) {$[10,14]$};
\node[label] at(e7) {$[4,9]$};
\node[label] at(e8) {$[11,14]$};
\node[label] at(e9) {$[5,9]$};
\node[label] at(e10) {$[1,14]$};
\node[label] at(e11) {$[2,14]$};
\node[label] at(e12) {$[9,14]$};
\node[label] at(e13) {$[3,9]$};
\node[label] at(e14) {$[7,14]$};
\node[label] at(e15) {$[4,14]$};
\node[label] at(e16) {$[8,14]$};
\node[label] at(e17) {$[5,14]$};
\node[label] at(e18) {$[6,14]$};
\node[label] at(e19) {$[3,14]$};
\end{tikzpicture}}
          \caption{Suffix tree \sufft\ of $\rS{i-1}$}
          \label{fig:A}
     \end{subfigure}
     \begin{subfigure}[b]{0.55\textwidth}
          \centering
          \resizebox{\linewidth}{!}{\begin{tikzpicture}[
level distance=15mm,
level 1/.style={sibling distance=60mm},
level 2/.style={sibling distance=20mm},
level 3/.style={sibling distance=20mm},
level 4/.style={sibling distance=20mm},
]

\tikzstyle{c} = [draw, shape=circle, minimum width=9mm]
\tikzstyle{lab} = [font=\footnotesize]

\node[c] (n0) {0}[edge from parent]
    child {node[c] (n5) {5}
      child {node[c] (n4) {4}
          child {node[c] (n1) {1}
              edge from parent coordinate (e1);
          }
          child {node[c] (n10) {10}
              edge from parent coordinate (e10);
          }
          child {node[c] (n11) {11}
              edge from parent coordinate (e11);
          }
          edge from parent coordinate (e4);
          }
      child {node[c] (n12) {12}
          edge from parent coordinate (e12);
          }
      child {node[c] (n13) {13}
          child {node[c] (n18) {18}
              edge from parent coordinate (e18);
              }
          child {node[c] (n19) {19}
              edge from parent coordinate (e19);
              }
          edge from parent coordinate (e13);
          }
      edge from parent coordinate (e5);
    }
    child {node[c] (n7) {7}
        child {node[c] (n2) {2}
            child {node[c] (n6) {6}
              edge from parent coordinate (e6);
            }
          edge from parent coordinate (e2);
        }
        child {node[c] (n14) {14}
          edge from parent coordinate (e14);
        }
        child {node[c] (n15) {15}
          edge from parent coordinate (e15);
        }
      edge from parent coordinate (e7);
    }
    child {node[c] (n9) {9}
        child {node[c] (n3) {3}
            child {node[c] (n8) {8}
              edge from parent coordinate (e8);
            }
          edge from parent coordinate (e3);
        }
        child {node[c] (n16) {16}
          edge from parent coordinate (e16);
        }
        child {node[c] (n17) {17}
          edge from parent coordinate (e17);
        }
      edge from parent coordinate (e9);
    };
\node[lab, right of=n0, node distance=0.85cm] {$[0,0]$};
\node[lab, below of=n1, node distance=0.7cm] {$[1,1]$};
\node[lab, right of=n2, node distance=0.85cm] {$[4,6]$};
\node[lab, right of=n3, node distance=0.85cm] {$[5,6]$};
\node[lab, right of=n4, node distance=0.85cm] {$[1,2]$};
\node[lab, right of=n5, node distance=0.85cm] {$[3,6]$};
\node[lab, below of=n6, node distance=0.7cm] {$[10,14]$};
\node[lab, right of=n7, node distance=0.85cm] {$[4,9]$};
\node[lab, below of=n8, node distance=0.7cm] {$[11,14]$};
\node[lab, right of=n9, node distance=0.85cm] {$[5,9]$};
\node[lab, below of=n10, node distance=0.7cm] {$[1,14]$};
\node[lab, below of=n11, node distance=0.7cm] {$[2,14]$};
\node[lab, right of=n12, node distance=0.85cm] {$[3,9]$};
\node[lab, right of=n14, node distance=0.85cm] {$[7,14]$};
\node[lab, right of=n13, node distance=0.85cm] {$[3,9]$};
\node[lab, right of=n15, node distance=0.85cm] {$[4,14]$};
\node[lab, right of=n16, node distance=0.85cm] {$[8,14]$};
\node[lab, right of=n17, node distance=0.85cm] {$[5,14]$};
\node[lab, below of=n18, node distance=0.7cm] {$[6,14]$};
\node[lab, below of=n19, node distance=0.7cm] {$[3,14]$};
\end{tikzpicture}}
          \caption{Centroid decomposition \stcentroidT\ of \sufft }
          \label{fig:B}
     \end{subfigure}\\[1ex]
     \begin{subfigure}[b]{\textwidth}
          \centering
          \resizebox{\linewidth}{!}{\begin{tikzpicture}
\node [left] at (0,0) {$\rS{i-1}$};
\foreach \x in {1,...,14}
  \node at (\x*0.5, 0.4) {{\scriptsize \x}};
\foreach \x\v in {1/A,2/A,3/A,4/B,5/C,6/A,7/B,8/C,9/A,10/B,11/C,12/A,13/A,14/A}
  \node [draw, fill=lightgray,minimum width=0.5cm] at (\x*0.5, 0) {\v};

\node [left] at (0,-1) {$S$};
\foreach \x\v in {1/A,2/A,3/A,4/B,5/C,6/A,7/B,8/C,9/A,10/B,11/C,12/A,13/A,14/A,15/B,16/C,17/A,18/B,19/C,20/A}
    \node [draw, fill=lightgray, minimum width=0.5cm] at (\x*0.5, -1) {\v};

\foreach \x\v in {15/B,16/C,17/A,18/B,19/C,20/A}
    \node [draw, fill=red,minimum width=0.5cm] at (\x*0.5, -1) {\v};

\foreach \x\v in {21/A,22/A,23/A,24/B,25/C,26/A,27/B,28/C,29/A,30/B,31/C,32/A,33/A,34/A}
    \node [draw, fill=lightgray, minimum width=0.5cm] at (\x*0.5, -1) {\v};

\foreach \x\v in {35/A,36/B,37/C,38/A,39/B}
    \node [draw, fill=red,minimum width=0.5cm] at (\x*0.5, -1) {\v};

\end{tikzpicture}}
          \caption{The reconstructed string $\rS{i-1}$ and the unknown string $S$}
          \label{fig:C}
     \end{subfigure}
     \caption{A running example\label{fig:running}}
 \end{figure}
 
Suppose we have already reconstructed the string $\rS{i-1} = \tt{AAABCABCABCAAA}$.
This is a substring of the unknown string $S$ shown in Figure~\ref{fig:C}. There are two occurrences of $\rS{i-1} $ in $S$ and the red cells highlight the characters that we still need to learn.
The suffix tree \sufft\ of $\rS{i-1}$ (see Figure~\ref{fig:A}) has been built online with Ukkonen's algorithm and it will be updated as soon as we learn more characters. For this reason we do not append any special character at the end of $\rS{i-1}$.
Thus, there may exist suffixes of $\rS{i-1}$ which do not have their leaves in \sufft\ because they are proper
prefixes of some another suffix. In our example this happens to the last three suffixes $\tt{A}$, $\tt{AA}$ and $\tt{AAA}$ which are proper prefixes of the whole string $\rS{i-1}$.
Nodes of \sufft\ are numbered (in our example levelwise just for convenience) to map the corresponding node in the centroid decomposition \stcentroidT (see Figure~\ref{fig:B}).

The label on the edge from node $u$ to its child $v$ reports the interval $[i,j]$ of positions on string $\rS{i-1}$ representing the locus of node $v$. For example, the label on the edge from node $5$ to node $13$ is $[3,9]$ and, thus, $\locus(13)={\tt ABCABCA}$.
In the centroid decomposition \stcentroidT, each node $u$ is labeled with the interval of positions of
$\locus(u)$. The label of node $13$ is $[3,9]$ because $\locus(13)={\tt ABCABCA}$.
In the centroid tree, the leftmost child of any node $u$ is the centroid decomposition of the connected component containing the parent of 
$u$ (if any) while the other children are the centroid decompositions of the subtrees rooted at the children of $u$ in \sufft (if any).
Note that for any child $v$ of node $u$ but, possibly, the leftmost one, we have that $\locus(u)$ is a prefix of $\locus(v)$.

We start from the root of \stcentroidT\ (node $0$) which in our example, by coincidence, corresponds to the root of \sufft . We query the oracle for the substring $\rS{i-1}\cdot \locus(0)$. As $\locus(0)$ is the empty string, the oracle's answer will be positive. The algorithm continues on a child of node $0$. 
 
For any child $v$ of $u$, let be $c_v$ the character such that $\locus(u)\cdot c_v$ is a prefix of $\locus(v)$, 
i.e., $c_v = \locus(v)[|\locus(u)|+1]$. 
We process the children of node $0$ and continue on a node $v$ such that the query for the substring $\rS{i-1}\cdot \locus(0)\cdot c_v$
is successful. 
There may be several such nodes $v$.
For example, we could continue on both nodes $5$ and $7$. This is because both substrings $\rS{i-1}\cdot {\tt A}$ and $\rS{i-1}\cdot {\tt B}$ occur in $S$. We can arbitrarily choose any of these nodes but, of course, the length of the substring we reconstruct may vary.
Suppose we continue on node $5$. Then, we 
ask for the substring  $\rS{i-1} \cdot \locus(5)$. 
As the answer is positive, we continue with node $12$
because $c_{12}= {\tt B}$ and $\rS{i-1} \cdot \locus(5)\cdot c_{12}$ occurs in $S$.
We query for $\rS{i-1} \cdot \locus(12)$.
As the answer is negative, we binary search 
for the longest prefix $P$ of $\locus(12)$ such that 
$\rS{i-1}\cdot P$ occurs in $S$. 
This way, we reconstruct the substring $X ={\tt ABCAB}$ and we learn $\rS{i} = \rS{i-1}\cdot X$.

\section{Conclusions and future work}\label{sec:final}

We investigated the connection between the string reconstruction and compression problems, establishing that they essentially coincide: the number of substring queries that need to be asked to an oracle in order to reconstruct a string $S$ is proportional to the complexity of $S$.

We also showed that it is possible to efficiently reconstruct a string of length $n$ over an alphabet of size $\sigma$ using $O(\sigma g \log n) \subseteq O(\sigma \cdot z_{no} \log(n/z_{no})\log n)$ queries in $O(n(\log \sigma + \log n))$ time, where $z_{no}$ is the number of phrases of the LZ77 factorization of $S$  without overlaps and $g$ is the size of the smallest grammar producing $S$. Immediate improvements over our work would be to replace $z_{no}$ with the more powerful $z$ (i.e., allowing overlaps), or to shave log factors from the complexities of our algorithms.
In particular, we know that 
the number of queries cannot be improved by more than a factor $O(\log \frac{n}{z_{no}} \log \sigma)$ in general.

In our setting, we aim at reconstructing the whole unknown string $S$. One can also consider the problem of reconstructing the set of all substrings of $S$ of a given length $k$, see for example \cite{Tsur05}. Notice that knowing all the substrings of $S$ of length $r(S)+2$ allows one to uniquely determine $S$, where $r(S)$ is the repetition index of $S$, that is, the length of the longest repeat of $S$ \cite{CaDel01,Fi06}. 

Another direction of investigation consists in introducing uncertainty into the model. For example, allowing the oracle to answer the queries with a certain probability of returning a wrong result --- this could model strings with character ambiguities, e.g., DNA strings arising from a sequencing --- or allowing the oracle to return positive answers to queries within a limited Hamming distance from substrings of the target string.

%

 \bibliographystyle{elsarticle-num}





\end{document}